%
\documentclass[runningheads]{llncs}
\usepackage{graphicx}
\usepackage{cite}
\usepackage{mathrsfs}
\usepackage{amsfonts}
\usepackage{amsmath}
\usepackage{amsfonts,amssymb,color,enumerate}
\usepackage{dsfont}
\usepackage{curves}
\usepackage{cases}
\usepackage{mathrsfs}
\usepackage{pifont}
\usepackage{amssymb}
\usepackage{epsfig}
\usepackage{bm}
\usepackage{epstopdf}
\usepackage{graphicx,color}
\usepackage{algorithm}
\usepackage{algorithmic}
\usepackage{color}
\usepackage{float}
\usepackage{bbding}

%

\begin{document}
\title{Approximation algorithms for general cluster routing problem}
%
%
\author{Xiaoyan Zhang\inst{1} \and
Donglei Du\inst{2}\and
Gregory Gutin\inst{3}\and
Qiaoxia Ming\inst{1}\and
Jian Sun\inst{1}(\Envelope)\thanks{This research is supported or partially supported by the
National Natural Science Foundation of China (Grant Nos. 11871280,
11371001, 11771386 and 11728104), the Natural Sciences and Engineering Re-search Council of Canada (NSERC) Grant 06446 and Qinglan Project.}}

\authorrunning{Xiaoyan Zhang et al.}
%
\institute{School of Mathematical Science \& Institute of Mathematics, Nanjing Normal University, Jiangsu 210023, P. R. China \\
\email{sunjian199203@126.com}\and
Faculty of Management, University of New Brunswick, Fredericton, New
Brunswick, Canada, E3B 5A3\\
\email{ddu@unb.ca}\and
Department of Computer Science
Royal Holloway, University of London
Egham, Surrey, TW20 0EX, UK\\
\email{g.gutin@rhul.ac.uk}
}
\maketitle              
\begin{abstract}
Graph routing problems have been investigated extensively in operations research, computer science and engineering due to their ubiquity and vast applications. 
In this paper, we study constant approximation algorithms for some variations of the general cluster  routing  problem. In this problem, we are given an edge-weighted complete undirected graph $G=(V,E,c),$ whose vertex set is partitioned into clusters $C_{1},\dots ,C_{k}.$ We are also given a subset $V'$ of $V$ and a subset $E'$ of $E.$ The weight function $c$ satisfies the triangle inequality.
The goal is to find a minimum cost walk $T$ that visits each vertex in $V'$ only once, traverses every edge in $E'$ at least once and for every $i\in [k]$ all vertices of $C_i$ are traversed consecutively.
\keywords{Routing problem  \and Approximation algorithm  \and General routing problem.}
\end{abstract}
\section{Introduction}\label{Introduction}
Graph routing problems have been studied extensively since the early 1970s. Most of there problems are NP-hard, and hence no polynomial-time exact algorithms exist for most of them unless P=NP. In a typical routing problem, a salesman starts from a home location, visits a set of prescribed cities exactly once, and returns to the original location with minimum total distance travelled.


Arguably the most well-known routing problem is the \emph{travelling salesman problem (TSP)} (see \cite{Gutin} for a compendium of results on the problem). We are given a weighted graph $G=(V,E, c)$ (directed or undirected) with vertex set $V$, edge set $E$, and cost $c(e)$ for each edge $e\in E$.  The TSP's goal is to find a Hamiltonian cycle with minimum total cost.
Without loss of generality, we may assume that $G$ is a complete graph (digraph); otherwise, we could replace the missing edges with edges of very large cost.





Unfortunately, the TSP is NP-hard even for metric arc costs \cite{Karp}. Therefore, one approach for solving the TSP (and other NP-hard problems) is using (polynomial-time) approximative algorithm whose performance is measured by the {\em approximation ratio}, which is the maximum ratio of the approximative solution value to the optimum value among all problem instances. The best known approximation algorithm for the TSP with triangle inequality is by Christofides \cite{Christofides} with ratio 1.5. For the general TSP where the triangle inequality does not hold, there is no (polynomial-time) approximation algorithm with a constant approximation ratio,
unless P=NP \cite{Sahni}. TSP along with its variations have been extensively investigated in the literature. Here are two generalizations of TSP studied in the literature.

The \emph{general routing problem (GRP)}: Let $G=(V,E,c)$ be an edge-weighted complete undirected graph such that the triangle inequality holds for the weight function $c$. The goal is to find a minimum cost
walk that visits  each vertex in a required subset $V^{'}\subseteq V$ exactly once and traverses every edge in a required subset $E^{'}\subseteq E$ at least once. For this problem, Jansen \cite{Jansen} gave a 1.5-approximation algorithm.

The \emph{ cluster travelling salesman problem (CTSP)}: Let $G=(V,E,c)$ be an edge-weighted complete undirected graph such that the triangle inequality holds for the weight function $c$. The vertex set $V$ is partitioned into clusters $C_{1},\dots ,C_{k}$. The goal is to compute a minimum cost Hamiltonian cycle $T$ that visits all vertices of each cluster consecutively (and thus for each cluster we have starting and finishing vertices on $T$). Arkin et al. \cite{Arkin} designed a 3.5-approximation algorithm for the problem with given starting vertices in each cluster. Guttmann-Beck et al. \cite{Guttmann} proposed a 1.9091-approximation algorithm for the problem in which the starting and ending vertices of each cluster are specified and gave a 1.8-approximation algorithm if for each cluster two vertices are given such that one of the them can be a starting vertex and the other the finishing vertex.

In this paper, we introduce and study the \emph{general cluster routing problem} (GCRP) which generalizes both GRP and CTSP.
We provide approximation algorithms of constant approximation ratio for variations of this problem.
%
%
 In GCRP, we are given an edge-weighted undirected graph $G=(V,E, c)$  such that the triangle inequality holds for the weight function $c$.  The vertex set $V$ is partitioned into clusters $C_{1},\dots ,C_{k}$. For any given  vertex subset $V^{\prime}\subseteq V$ and edge subset $E^{\prime}\subseteq E$, the aim is to find a minimum cost walk $T$ (hereafter a walk will be called a {\em tour})  that visits each vertex in $V^{\prime}$ exactly once and traverses each edge in $E^{\prime}$ at least once such that for every $i\in [k]$ all vertices of $T$ belonging to $C_i$ are visited consecutively in $T$.
 Depending on whether or not the starting and finishing vertices of a cluster are specified or not, we consider two cases. When every cluster has a pair of specified starting and finishing vertices, we offer a 2.4-approximation combinatorial algorithm. When every cluster has unspecified starting and finishing vertices, depending on whether the {\em required} edges (i.e., those in $E'$) are incident with different clusters or not, we further consider two subcases. If  all required edges are only distributed in the clusters, we get a 3.25-approximation combinatorial algorithm. On the other hand, if there exist edges from $E^{\prime}$ incident with two different clusters, we get a 2.25-approximation combinatorial algorithm.

The remainder of this paper is organized as follows. 
We provide some preliminaries in Section~\ref{sec:prel}. We study algorithms for the 
GCRP in Sections~
\ref{sec:cgrp}. 
We conclude in Section \ref{sec:c}.
Every theorem whose proof is given in Appendix is marked by ($\star)$.

\section{Preliminaries}\label{sec:prel}
In this section, we recall some algorithms for three problems along with some preliminary results, which will be used as subroutines in our algorithms later.
\subsection{The Travelling Salesman Path Problem}
The \emph{traveling salesman path problem (TSPP)} \cite{Fumei, Traub, Hoogeveen, Guttmann, Sebo1, Sebo2} is a generalization of the TSP, but received much less attention  than TSP in the literature.
In TSPP, given an  edge-weighted undirected graph $G = (V,E,c)$ and two vertices $s, t\in V$, the aim is to find a minimum cost Hamiltonian path from $s$ to $t.$  Note that vertices $s$ and $t$ need not be distinct. However, when $s=t$ TSPP is equivalent to the TSP. 
Let $MST(G)$ be a minimum spanning tree of $G$. For simplicity, $MST(G)$ will also denote the cost of this tree.

Hoogeveen \cite{Hoogeveen} considered three variations of the travelling salesman path problem (TSPP), where as part of the inputs, the following constraints are placed on the end vertices of the resulting Hamiltonian path:
\begin{enumerate}[(1)]
\item both the source and the destination are specified;
\item one of the the source and the destination is specified;
\item neither the source nor the destination are specified.
\end{enumerate}
\begin{property}
For Cases (2) and (3), it was shown in \cite{Hoogeveen} that a straightforward adaptation of Christofide's algorithm can yield an algorithm with a performance ratio of $\frac{3}{2}$.
\end{property}

However, Case (1) is more difficult, for which many results exist in the literature. On the positive side, a $\frac{5}{3}$-approximation algorithm is proposed in \cite{Hoogeveen}, followed by an improved $\frac{8}{5}$-approximation in \cite{Sebo2}. Sebo \cite{Sebo1} gave a strongly polynomial algorithm and improved the analysis of the metric $s-t$ path TSP. He found a tour of cost less than 1.53 times the optimum of the subtour elimination LP.  On the negative side, the usual integer linear programming formulation has an integrality gap at least 1.5.

Let $c(P)$ be the sum of all the edge costs of a given path or tour $P$. The following result from \cite{Hoogeveen} will be used later.
\begin{theorem}\label{th12} \cite{Hoogeveen}
There exists a polynomial-time algorithm for travelling salesman path problem with given end
vertices $s$ and $t$, and we can find two solutions $S_{1}$ and $S_{2}$ for the problem which satisfy the following inequalities:
\begin{eqnarray*}
c(S_{1}) &\le& 2 MST(G)-c(s,t)\leq 2 OPT-c(s,t),\\
c(S_{2}) &\le& MST(G)+\frac12(OPT+c(s,t))\le\frac 32 OPT+\frac12c(s,t).
\end{eqnarray*}
\end{theorem}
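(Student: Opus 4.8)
The plan is to obtain $S_1$ by a ``double‑tree''–type construction and $S_2$ by a Christofides–type construction; in both cases the core idea is to build a connected spanning multigraph of $G$ in which exactly $s$ and $t$ have odd degree, take an Eulerian $s$--$t$ walk in it, and shortcut repeated vertices to a Hamiltonian $s$--$t$ path using the triangle inequality (so the path costs no more than the multigraph). Throughout I will use the trivial bound $MST(G)\le OPT$, valid because an optimal Hamiltonian $s$--$t$ path is itself a spanning tree of $G$.

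For $S_1$: compute a minimum spanning tree $T$ (so $c(T)=MST(G)$) and let $P_{st}$ be the unique $s$--$t$ path in $T$. Form the multigraph $H_1$ from $T$ by doubling every edge \emph{not} lying on $P_{st}$. A short degree count shows that every interior vertex of $P_{st}$ and every vertex off $P_{st}$ has even degree in $H_1$, whereas $s$ and $t$ have odd degree, and $H_1\supseteq T$ is connected; hence $H_1$ admits an Eulerian $s$--$t$ walk. Shortcutting it yields a Hamiltonian $s$--$t$ path $S_1$ with $c(S_1)\le c(H_1)=2\,c(T)-c(P_{st})$. Since the triangle inequality gives $c(P_{st})\ge c(s,t)$, we conclude $c(S_1)\le 2\,MST(G)-c(s,t)\le 2\,OPT-c(s,t)$.

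For $S_2$: again take a minimum spanning tree $T$, let $O$ be its set of odd‑degree vertices, and set $W:=O\,\triangle\,\{s,t\}$, the set of vertices whose $T$‑parity is ``wrong'' for an $s$--$t$ walk. A handshake argument gives $|W|$ even, so $G$ (with its metric $c$) has a minimum‑cost perfect matching $M$ on $W$, computable in polynomial time. In $T\cup M$ exactly $s$ and $t$ have odd degree and the graph is connected, so as before we shortcut an Eulerian $s$--$t$ walk to a Hamiltonian $s$--$t$ path $S_2$ with $c(S_2)\le c(T)+c(M)=MST(G)+c(M)$. To bound $c(M)$, take an optimal $s$--$t$ path $P^\ast$, close it into a Hamiltonian cycle $H=P^\ast+(s,t)$ of cost $OPT+c(s,t)$, and shortcut $H$ onto the vertex set $W$; this produces a closed walk on $W$ of cost at most $OPT+c(s,t)$ whose edges split into two perfect matchings of $W$. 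The cheaper of the two costs at most $\tfrac12\bigl(OPT+c(s,t)\bigr)$, and by minimality $c(M)$ does too. Hence $c(S_2)\le MST(G)+\tfrac12(OPT+c(s,t))\le \tfrac32 OPT+\tfrac12 c(s,t)$.

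The degree counts and the Eulerian‑walk/shortcutting steps are routine. The main obstacle is the bound on $c(M)$: one must choose the parity‑correction set $W$ so that $T\cup M$ really has odd degree at exactly $s$ and $t$, check that $|W|$ is even, and — the delicate point — carefully justify shortcutting $P^\ast+(s,t)$ onto $W$ and decomposing the resulting cycle into two matchings, since the vertices of $W$ may occur in an awkward order along $H$. Two small cases should be dispatched separately: $W=\emptyset$ (where the bound on $c(M)$ is immediate) and $s=t$ (where the construction degenerates to ordinary Christofides and $c(s,t)=0$, so both displayed inequalities hold).
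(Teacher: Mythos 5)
Your proposal is correct and follows essentially the same route as the source: the paper itself gives no proof of this theorem (it is quoted from \cite{Hoogeveen}), but the two constructions you describe --- doubling the MST edges off the tree's $s$--$t$ path for $S_1$, and MST plus a parity-correcting minimum matching on $W=O\,\triangle\,\{s,t\}$ bounded by half the cost of the cycle $P^\ast+(s,t)$ for $S_2$ --- are exactly the standard Hoogeveen argument, and are the same ideas the paper adapts in its own proof of Theorem~\ref{th4}. The ``delicate point'' you flag is in fact routine: shortcutting the Hamiltonian cycle $H$ onto $W$ in the cyclic order in which the vertices of $W$ occur on $H$ yields an even cycle on $W$, whose alternate edges form the two perfect matchings.
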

\begin{corollary}\cite{Hoogeveen}\label{co1}
The shorter of the tours $S_{1}$ and $S_{2}$ is at most $\frac{5}{3}OPT$.
\end{corollary}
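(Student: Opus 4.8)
The plan is to derive the ratio directly from the two inequalities of Theorem~\ref{th12}, by a one-parameter case analysis on the size of the chord cost $c(s,t)$. Write $\lambda=c(s,t)$ and regard the two guarantees as functions of $\lambda$: Theorem~\ref{th12} gives $c(S_1)\le 2\,OPT-\lambda$ and $c(S_2)\le \tfrac32 OPT+\tfrac12\lambda$. Since $\min\{c(S_1),c(S_2)\}$ is at most the smaller of these two upper bounds, it suffices to show
\[
\min\Bigl\{\,2\,OPT-\lambda,\ \tfrac32 OPT+\tfrac12\lambda\,\Bigr\}\ \le\ \tfrac53 OPT
\]
for every admissible $\lambda$.

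The key observation is that the first bound is decreasing in $\lambda$ while the second is increasing in $\lambda$, so their minimum is largest exactly where the two are equal, namely at $2\,OPT-\lambda=\tfrac32 OPT+\tfrac12\lambda$, i.e.\ $\lambda=\tfrac13 OPT$, where both equal $\tfrac53 OPT$. So I would split into two cases. If $\lambda\ge \tfrac13 OPT$, then $c(S_1)\le 2\,OPT-\lambda\le 2\,OPT-\tfrac13 OPT=\tfrac53 OPT$. If $\lambda\le \tfrac13 OPT$, then $c(S_2)\le \tfrac32 OPT+\tfrac12\lambda\le \tfrac32 OPT+\tfrac16 OPT=\tfrac53 OPT$. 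In either case the shorter of $S_1$ and $S_2$ has cost at most $\tfrac53 OPT$, as claimed.

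There is essentially no obstacle here: once Theorem~\ref{th12} is in hand the statement is a routine optimization, and the only point worth recording is that the pointwise minimum of two valid upper bounds is a valid upper bound on $\min\{c(S_1),c(S_2)\}$. One can also avoid the case split entirely: since $\min\{c(S_1),c(S_2)\}\le \tfrac13 c(S_1)+\tfrac23 c(S_2)$, and the weights $\tfrac13,\tfrac23$ are chosen precisely so that the $\lambda$-terms cancel, Theorem~\ref{th12} gives $\min\{c(S_1),c(S_2)\}\le \tfrac13(2\,OPT-\lambda)+\tfrac23(\tfrac32 OPT+\tfrac12\lambda)=\tfrac53 OPT$.
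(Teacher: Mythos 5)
Your proof is correct and uses essentially the same argument as the paper: a case split on whether $c(s,t)$ is at least or at most $\frac{1}{3}OPT$, applying the bound on $c(S_1)$ in the first case and on $c(S_2)$ in the second. The convex-combination remark at the end is a nice equivalent reformulation but does not change the substance.
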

\begin{proof} By Theorem \ref{th12}, if $c(s,t)\geq \frac{1}{3}OPT$, then $c(S_{1})\leq\frac{5}{3}OPT$. Otherwise (i.e. $c(s,t)\leq \frac{1}{3}OPT$) we have $c(S_{2})\leq\frac{5}{3}OPT$.\qed
\end{proof}

Below, we consider a more general problem, called the \emph{travelling general path problem (TGPP)}. Let $G=(V,E,c)$ be a weighted connected graph with two specified ending vertices $s, t\in V$. For any given vertex subset $V^{\prime}\subseteq V$ and edge subset $E^{\prime}\subseteq E$, the objective is to find a minimum cost path from $s$ to $t$ in $G$ that visits all vertices in $V^{\prime}$ exactly once and traverses all edges in $E^{\prime}.$ Note that when $s=t$, this problem becomes the general routing problem introduced in \cite{Bienstock} which was discussed earlier. We focus on the case $s\neq t$ in the reminder of this paper.

Note that this is a minimum cost problem and the edge costs satisfy the triangle inequality. Thus, we can reduce the visits of vertices and edges not in $V^{\prime}$ and $E^{\prime}$. Namely, we can create a new reduced graph as follow in the problem:
\[
G^{\prime}=(\{v|v\in e, e\in E^{\prime}\}\cup\{s\}\cup\{t\}\cup V^{\prime},E^{\prime}).
\]

We assume that $s$ and $t$ are two different vertices in the new graph $G^{\prime}$. First, we compute the connected components of $G^{\prime}$ via depth-first search in polynomial time. Then, contracting each component to a vertex, we construct a new complete graph $G^{*}$, where each edge cost between vertices is the longest edge cost between each pair of components, which is defined as the distance of each pair of component. This can be done in polynomial time. But we only consider those edges between the vertices with degree $d(v)\in\{0,1\}$.  Finally from the graph $G^{*}$, we create a feasible solution as described in Algorithm \ref{alg1}.
\begin{algorithm}[!htb]
	\renewcommand{\algorithmicrequire}{\textbf{Input:}}
    \renewcommand{\algorithmicensure}{\textbf{Output:}}
	\caption{Algorithm of TGPP with specified vertice}
	\begin{algorithmic}[1]\label{alg1}
		\REQUIRE
 \STATE An edge-weighted undirected graph $G=(V,E,c)$.
 \STATE Starting vertex $s$ and ending vertex $t$ of $G$.
 \STATE $V^{\prime}\subseteq V$, $E^{\prime}\subseteq E$ are required vertex subset and edge subset, respectively.

        \ENSURE A travelling general salesman path.


\textbf{begin}:\\
 \STATE Construct  a new graph $G^{\prime}=(\{v|v\in e, e\in E^{\prime}\}\cup\{s\}\cup\{t\}\cup V^{\prime},E^{\prime})$.
 \STATE Compute the connected components $K_{1},\dots,K_{k}$ of $G^{\prime}$.
 \STATE Let $U$ be the set of vertices $v$ with degree $d(v)\in\{0,1\}$. Define a complete graph $G_{k}=([k], E_{k})$ with the cost $c(e)$ of edge $e=(i,j)$ with $i\neq j$ equal to the longest link between a vertex in $K_{i}\cap U$ and a vertex in $K_{j}\cap U.$
 \STATE Copy the edges of $MST(G^{*})$ except for those on $s$-$t$ path.
 \STATE Find an Eulerian walk between $s$ and $t$.
 \STATE Turn the Eulerian walk into a Hamilton path $S.$
 \STATE \textbf{output} $S$.\\
\textbf{end}\\
	\end{algorithmic}
\end{algorithm}

\begin{theorem}\label{th4}($\star$)
Let $S$ be the path output by Algorithm \ref{alg1}. Then we have
\begin{align*}
c(S)\le \min\left\{3OPT-c(s,t), \frac{3}{2}OPT+\frac{1}{2}c(s,t)\right\}.
\end{align*}
\end{theorem}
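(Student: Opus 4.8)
The plan is to bound the cost of the path $S$ produced by Algorithm~\ref{alg1} by comparing the structure the algorithm builds against a minimum spanning tree of the reduced/contracted graph, and then to combine this with a lower bound on $OPT$. Recall that in the algorithm we form $G'$ from the required vertices and edges, take its connected components $K_1,\dots,K_k$, and contract each component to obtain a complete graph $G^*$ (the paper writes $G_k$ and $G^*$ for essentially the same object) whose edge costs are the \emph{longest} links between vertices of degree at most one in the respective components. The feasible optimal tour $T^*$, restricted appropriately, must connect all the components together while passing through the required edges; hence its cost is at least the cost of the required edges plus the cost of a minimum spanning tree structure on $G^*$ that realizes the connection — this is the key lower-bounding observation, analogous to the classical fact $MST(G)\le OPT$ but now relative to the contracted graph.

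First I would establish the lower bound: $OPT \ge c(E') + MST(G^*) $-type inequality, more precisely that the optimal $s$–$t$ path induces a connected subgraph spanning all components, so its cost is at least the weight of a spanning tree of $G^*$ plus the (mandatory) traversal cost of the edges in $E'$ inside the components; using the triangle inequality to justify that the ``longest link'' edge costs in $G^*$ are not overestimates relative to what the optimal tour must pay to move between components. Second, I would analyze the upper bound. The algorithm copies the edges of $MST(G^*)$ except those lying on the $s$–$t$ path in the tree, together with the required edges inside components, then doubles appropriately to get an Eulerian walk from $s$ to $t$, and finally shortcuts it into a Hamiltonian-type path $S$ using the triangle inequality (shortcutting only over non-required vertices so the required-visit-once constraint is preserved). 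This yields a bound of roughly $c(S) \le 2\,MST(G^*) + c(E') - c(s,t)$, mirroring the first inequality in Theorem~\ref{th12}; combined with the lower bound this gives $c(S)\le 3\,OPT - c(s,t)$.

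For the second term in the minimum, I would instead mimic the Christofides-style bound of Theorem~\ref{th12}: rather than doubling the spanning structure, form a spanning tree of $G^*$ together with a minimum-cost matching on the odd-degree vertices (including the parity adjustment forced by $s$ and $t$), argue as in \cite{Hoogeveen} that the matching costs at most $\tfrac12(OPT + c(s,t))$ because the optimal $s$–$t$ path plus the edge $(s,t)$ decomposes into two edge-disjoint perfect matchings on the relevant odd-degree set, and then the Eulerian-walk-plus-shortcut argument gives $c(S)\le MST(G^*) + \tfrac12(OPT + c(s,t)) \le \tfrac32 OPT + \tfrac12 c(s,t)$. Taking the better of the two constructions yields the stated minimum.

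The main obstacle I anticipate is handling the required edges $E'$ and the contraction cleanly: one must be careful that (i) shortcutting the Eulerian walk never deletes a required vertex or a required edge, which forces the shortcutting to operate only on optional vertices and only on the ``connector'' edges coming from $G^*$; (ii) the parity/odd-vertex bookkeeping in the Christofides-style argument must be done on the contracted graph $G^*$ with the endpoints $s,t$ treated correctly, including the subtlety that $s$ and $t$ may themselves lie in non-trivial components; and (iii) the lower bound $MST(G^*)\le OPT - c(E')$ (or the matching-decomposition bound) must genuinely use that the optimal tour visits the components in some order and that moving between two components costs at least the $G^*$-distance between them, where the ``longest link'' definition of that distance is exactly what makes the inequality go through via the triangle inequality. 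Once these structural points are pinned down, the arithmetic combining the upper and lower bounds is routine and parallels the proof of Theorem~\ref{th12} and Corollary~\ref{co1}.
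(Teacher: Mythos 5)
Your proposal follows essentially the same route as the paper's own proof: the first bound comes from doubling $MST(G^*)$ minus the tree's $s$--$t$ path, adding the required edges, and shortcutting the resulting Eulerian walk, with $MST(G^*)\le OPT$ and $c(E')\le OPT$ supplying the $3OPT-c(s,t)$ bound; the second comes from the Christofides/Hoogeveen-style tree-plus-matching construction, bounding the matching by $\tfrac12(OPT+c(s,t))$ via the cycle obtained by closing the optimal path with the edge $(s,t)$. The structural cautions you raise (preserving required edges under shortcutting, parity bookkeeping on $G^*$, and the role of the inter-component distance definition) are exactly the points the paper's argument relies on, so the proposal matches the paper's proof in substance.
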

\begin{corollary}
The length of the tour output by Algorithm \ref{alg1} is at most $2OPT$.
\end{corollary}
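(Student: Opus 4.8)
The plan is to obtain the bound directly from Theorem~\ref{th4} by a one-line case analysis on the value $c(s,t)$, in exactly the spirit of the proof of Corollary~\ref{co1}. Write $\lambda=c(s,t)$ and recall from Theorem~\ref{th4} that $c(S)\le\min\{3\,OPT-\lambda,\ \tfrac32 OPT+\tfrac12\lambda\}$. The elementary observation is that the first quantity $3\,OPT-\lambda$ is at most $2\,OPT$ exactly when $\lambda\ge OPT$, whereas the second quantity $\tfrac32 OPT+\tfrac12\lambda$ is at most $2\,OPT$ exactly when $\lambda\le OPT$. Hence, whatever the value of $\lambda$, at least one of the two terms in the minimum is already $\le 2\,OPT$, so $c(S)\le 2\,OPT$; the two regimes meet precisely at the crossover point $\lambda=OPT$, so together they cover all cases.

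A cleaner alternative route, which I would probably present instead, is to first record the structural fact that $c(s,t)\le OPT$. Indeed, an optimal solution is a path from $s$ to $t$ (visiting the required vertices and traversing the required edges), and applying the triangle inequality repeatedly along that path shows that the direct edge $c(s,t)$ costs no more than the whole path, i.e. $c(s,t)\le OPT$. Substituting $c(s,t)\le OPT$ into the second bound of Theorem~\ref{th4} gives immediately
\[
c(S)\le \tfrac32 OPT+\tfrac12 c(s,t)\le \tfrac32 OPT+\tfrac12 OPT=2\,OPT,
\]
which is the claim.

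There is no real obstacle here: the entire argument is arithmetic once Theorem~\ref{th4} is in hand. The only point deserving a sentence of justification is why the two branches of the minimum jointly suffice — equivalently, the triangle-inequality argument that $c(s,t)\le OPT$ — after which the conclusion is immediate.
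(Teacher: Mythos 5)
Your first argument is exactly the paper's proof: a case split on whether $c(s,t)\ge OPT$ or $c(s,t)\le OPT$, using the first or second bound of Theorem~\ref{th4} respectively. The alternative you sketch (observing $c(s,t)\le OPT$ always holds by the triangle inequality, so the second bound alone suffices) is a valid minor simplification, but both routes are correct and essentially the same as the paper's.
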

\begin{proof}
By Theorem \ref{th4}, if $c(s,t)\geq OPT$, then $c(S)\leq 3OPT-c(s,t)\leq 2OPT$. Otherwise, if $c(s,t)\leq OPT$, we have $c(S)\leq\frac{3}{2}OPT+\frac12c(s,t)\leq 2OPT$.  \qed
\end{proof}

\subsection{The Stacker Crane Problem}
Given a weighted graph $G=(V,E,c)$ whose edge costs satisfy the triangle inequality. Let $D=\{(s_{i},t_{i}):i=1,\dots ,k\}$ be a given set of special directed arcs, each with length $l_{i}$. The arc $\overrightarrow{(s_{i},t_{i})}$ denotes an object that is at vertex $s_{i}$ and needs to be moved to vertex $t_{i}$ using a vehicle (called the stacker crane). The problem is to compute a shortest walk that traverses each directed arc $\overrightarrow{(s_{i},t_{i})}$ at least once in the specified direction (from $s_{i}$ to $t_{i}$). Let $D=\sum\limits_{i}l_{i}$ and $A=OPT-D$.

This problem is a generalization of the TSP, which can be viewed as an instance of this problem where each vertex is replaced by an arc of zero-length. Frederickson et al. presented a 1.8-approximation algorithm for this problem \cite{Frederickson}. This algorithm applies two subroutines and then selects the better of the two solutions generated. The main ideas of these two subroutines are summarized below for convenience (see \cite{Frederickson, Guttmann} for details):
\begin{itemize}
  \item Algorithm Short-Arcs 1: Shrink the directed arcs and reduce the problem to an instance of TSP. Use an approximation algorithm for the TSP instance, and then recover a solution for the original problem. This algorithm works well when $D\leq\frac{3}{5}OPT$.
  \item Algorithm Long-Arcs 1: Complete the set of directed arcs into a directed cycle cover. Then find a set of edges of minimum total weight to connect the cycles together. Add two copies of each one of these edges, and orient the copies in opposite directions to each other. The resulting graph is Eulerian, and the algorithm outputs an Euler walk of this solution. The algorithm performs well when $D>\frac{3}{5}OPT$.
\end{itemize}
The following theorem can be derived from \cite{Frederickson}.
\begin{theorem}\label{th5}\cite{Frederickson}
Consider an instance of the Stacker Crane Problem where the sum of the lengths of the special directed arcs is $D$. Let $OPT$ be an optimal solution, and let $A=OPT-D$. The walk returned by Algorithm Short-Arcs 1 has length at most $\frac{3}{2}A+2D$. The walk returned by Algorithm Long-Arcs 1 has length at most $3A+D$.
\end{theorem}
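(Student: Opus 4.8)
The plan is to establish the two stated bounds separately, one per subroutine, in both cases comparing against a fixed optimal stacker crane walk $W^*$ of cost $OPT=A+D$. Since $c$ satisfies the triangle inequality, $W^*$ may be taken to traverse each special arc exactly once; writing $\pi(1),\dots,\pi(k)$ for the order in which $W^*$ meets the arcs, $W^*$ splits into the $k$ arcs (total length $D$) and $k$ connecting subpaths $P_j$ from $t_{\pi(j)}$ to $s_{\pi(j+1)}$ (indices mod $k$) with $\sum_jc(P_j)=A$; hence $\sum_jc(t_{\pi(j)},s_{\pi(j+1)})\le A$. A second consequence, used below, is that if we contract a partition of the vertices in which no arc has its two endpoints in different parts, then $W^*$ collapses to a connected closed walk spanning the parts whose cost is at most $A$ (arcs become zero-cost self-loops, and each remaining segment costs at least the distance between the two parts it joins).

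\textbf{Algorithm Long-Arcs 1.} Extend the arcs to a directed cycle cover by a minimum-cost assignment matching each head $t_i$ to a tail $s_{\sigma(i)}$; its cost $M$ is at most $A$, since the single $k$-cycle permutation read off from $\pi$ is a feasible assignment of cost $\sum_jc(t_{\pi(j)},s_{\pi(j+1)})\le A$. The arcs plus matching edges form vertex-disjoint directed cycles of total cost $D+M$, every vertex having in-degree and out-degree $1$. Contract each cycle to a point and take a minimum spanning tree $T$ on these points under the shortest-link distance; collapsing the same cycles in $W^*$ yields a connected spanning subgraph of cost $\le A$, so $c(T)\le A$. Finally add two oppositely oriented copies of each edge of $T$: every vertex keeps in-degree equal to out-degree, the multigraph becomes connected, and an Euler walk of it traverses every special arc once in the prescribed direction with total cost $D+M+2c(T)\le 3A+D$.

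\textbf{Algorithm Short-Arcs 1.} Shrink each arc $(s_i,t_i)$ to a vertex $v_i$, giving $k$ vertices with distances $\min\{c(x,y):x\in\{s_i,t_i\},\,y\in\{s_j,t_j\}\}$. The cyclic order $\pi$ supplies a Hamiltonian cycle on them of cost $\le\sum_jc(t_{\pi(j)},s_{\pi(j+1)})\le A$, so the optimum of this TSP instance is $\le A$ and Christofides' algorithm \cite{Christofides} returns a Hamiltonian cycle $H'$ with $c(H')\le\tfrac32A$. To expand $H'$ into a stacker crane walk, note that at $v_i$ its two incident edges attach to particular endpoints of arc $i$; entering at one, traversing $s_i\to t_i$, and leaving at the other costs $l_i$ plus a reconnection penalty $\rho_i$, which a short case analysis (using $c(s_i,t_i)\le l_i$) shows equals $0$, $c(s_i,t_i)$, or $2c(s_i,t_i)$ depending on which endpoints the two edges hit. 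The key point is that reversing the direction in which $H'$ is traversed interchanges these two edges, replacing $\rho_i$ by $\rho_i'$ with $\rho_i+\rho_i'=2c(s_i,t_i)\le2l_i$ in all cases; hence the cheaper of the two orientations of $H'$ has total penalty $\min\{\sum_i\rho_i,\sum_i\rho_i'\}\le\sum_ic(s_i,t_i)\le D$, and the resulting walk costs at most $c(H')+D+D\le\tfrac32A+2D$.

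The delicate part is concentrated in Short-Arcs~1. First, the complementarity $\rho_i+\rho_i'=2c(s_i,t_i)$ together with choosing the better orientation of $H'$ is exactly what produces the coefficient $2$ in front of $D$; routing $H'$ in an arbitrary direction would only give $\tfrac32A+3D$. Second, one must check that ``shrink, run Christofides, expand'' is legitimate even though the shrunk distance is not quite a metric (its triangle inequality through $v_j$ can fail by up to $l_j$); this is handled by never shortcutting a closed walk in the shrunk graph but instead certifying the required Hamiltonian cycle directly from $\pi$ as above, and by noting that the expansion re-traverses each arc in any case, which absorbs the residual slack.
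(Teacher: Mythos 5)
First, note that the paper does not actually prove Theorem~\ref{th5}: it is imported verbatim from Frederickson, Hecht and Kim \cite{Frederickson}, so there is no in-paper proof to compare yours against; I am judging your argument on its own merits. Your treatment of Algorithm Long-Arcs~1 is correct and is the standard argument: the cyclic order in which $OPT$ meets the arcs certifies an assignment of cost at most $A$, the contracted image of $OPT$ certifies a spanning tree of cost at most $A$, and adding two oppositely oriented copies of each tree edge preserves the in-degree/out-degree balance, giving $D+M+2c(T)\le 3A+D$.

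The Short-Arcs~1 half, however, has a genuine gap, and it sits exactly where you yourself flag ``the delicate part'': the inequality $c(H')\le\frac{3}{2}A$. Christofides' guarantee of $\frac{3}{2}$ times the optimum is valid only for metric instances, and, as you observe, the shrunk distance can violate the triangle inequality through $v_j$ by up to $c(s_j,t_j)$. Your remedy repairs only the easy half of the analysis (lower-bounding the shrunk TSP optimum by exhibiting the cycle induced by $\pi$, which indeed needs no shortcutting). It does not repair the two places where Christofides' algorithm and its analysis themselves shortcut in the shrunk space: (i) the bound on the minimum-weight perfect matching of the odd-degree MST vertices is obtained by shortcutting a near-optimal tour down to those vertices, and each skipped node $v_j$ costs an extra $c(s_j,t_j)$, so this step only yields $\frac{1}{2}\bigl(A+\sum_j c(s_j,t_j)\bigr)\le\frac{1}{2}(A+D)$ here; and (ii) the conversion of the Euler walk of MST-plus-matching into the Hamiltonian cycle $H'$ again skips repeated visits and pays up to $c(s_j,t_j)$ per skip (a node of high MST degree is skipped more than once). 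Substituting the corrected matching bound alone into your accounting already gives $A+\frac{1}{2}(A+D)+D+D=\frac{3}{2}A+\frac{5}{2}D$, weaker than the claimed $\frac{3}{2}A+2D$. The assertion that the expansion ``absorbs the residual slack'' is precisely the claim that each arc is charged at most once in total across the matching slack, the Euler-shortcut slack, and your orientation penalty $\rho_i$; that joint charging argument is the actual content of Frederickson et al.'s proof of this bound, and it is missing from your write-up.
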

\subsection{The Rural Postman Problem}
 Let $E^{\prime}\subseteq E$ be a specified subset of special edges. We use $c(e)$ to denote the edge cost of $e$. The rural postman problem (RPP) is to compute a shortest walk that visits all the edges in $E^{\prime}$. The Chinese Postman Problem is a special case of RPP in which $E^{\prime}=E$, i.e., the walk must include all the edges. The Chinese Postman Problem is solvable in polynomial time by reducing it to weighted matching, whereas RPP is NP-hard. Let $D=\sum_{i}l_{i}$ be the total length of the paths in all clusters. We recall the algorithms in \cite{Frederickson, Guttmann}.
 \begin{itemize}
   \item Algorithm Short-Arcs 2: Consider the line graph $c(G)$ of original graph $G$. This algorithm works well when $D\leq\frac{3}{5}OPT.$
   \item Algorithm Long-Arcs 2: Complete the set of undirected arcs into a cycle cover. Then find a set of edges of minimum total weight to connect the cycles together. Add two copies of each one of these edges. The resulting graph is Eulerian, and the algorithm outputs an Euler walk of this solution. The algorithm performs well when $D$ is large. Note that Algorithm Long-Arcs 2 is similar to Long-Arcs, but in this case, $D$ is a set of undirected edges. The algorithm performs well when $D>\frac{3}{5}OPT$.
 \end{itemize}
 The two algorithms defined above for SCP can be modified to solve RPP. It is easy to see that the second part of Theorem \ref{th5} holds for this case as well, i.e. the walk returned by Algorithm Long-Arcs 2 has length at most $3A+D$.
\begin{remark}\label{re1}
As indicated by Frederickson et al. \cite{Frederickson}, it is easy to show that the above algorithms produce a $\frac{3}{2}$ performance ratio for RPP.
\end{remark}

\section{The general cluster routing problem}\label{sec:cgrp}

\subsection{The general cluster routing problem with pre-specified starting and ending vertices}\label{subsec:cgrp_nse}

Note that there may exist two subcases in this case. First, each edge in $E^{\prime}$ is fully contained in its cluster. Second, some edges may be incident with more than one cluster. 

Let $s_i$ and $t_i$ be pre-specified starting and ending vertice of cluster $C_i$, $i\in [k].$ Since the goal is to find a minimal total edge cost and the edge costs satisfy the triangle inequality, we can ignore the vertices not in $V^{\prime}$ and edges not in $E^{\prime}$ from graph $G$ to consider a new graph instead. Namely, for every cluster $C_{i}$, $i\in [k]$, consider the GCRP in the following new graph $\overline{G}=\cup \overline{C_{i}}$, where
\begin{equation*}
\overline{C_{i}}=(\overline{V_{i}},\overline{E_{i}})=\Bigg{(}\{v|v\in e, e\in E^{\prime}_{i}\}\cup V^{\prime}_{i}\cup \{s_{i}\}\cup \{t_{i}\}, E^{\prime}_{i}\Bigg{)}.
\end{equation*}
Our algorithm is based on the following idea. First, within each cluster $\overline{C_{i}}$, we find a path $p_{i}$, starting with $s_{i}$ and ending at $t_{i}$, visits all the vertices in $V^{\prime}$ and edges of each cluster $\overline{C_{i}}$. This can be done by Algorithm \ref{alg1}. Second, we need to connect the paths by adding some edges to make the resulting graph into a single cycle.

Let $G=(V,E)$ be a complete graph with vertex set $V$ and edge set $E$, the vertex set is partitioned into clusters $C_{1},\dots ,C_{k}$. The starting and ending vertices in each cluster are specified. Let $\overline{C_{i}}=(\overline{V_{i}},\overline{E_{i}})$ be the new graph as described above. Clearly, the desired tour in $G$ does not always exist, e.g., when there exists a required edge $e\in E^{\prime}$ between cluster $C_{i}$ and cluster $C_{j}$, $i\neq j$, and this required edge is not a $(t_{i},s_{j})$ edge (in such a case,  at least one of the clusters must be visited more than one time). Henceforth, we will assume that the desired tour does exist.

\begin{algorithm}[H]
	\renewcommand{\algorithmicrequire}{\textbf{Input:}}
    \renewcommand{\algorithmicensure}{\textbf{Output:}}
	\caption{Algorithm of given starting and ending vertices}
	\begin{algorithmic}[1]\label{alg3}
		\REQUIRE
\STATE An edge-weighted graph $G=(V,E,c)$.
\STATE A partition of $V$ into clusters $C_{1},\dots,C_{k}$.
\STATE Each cluster $C_{i}$ with starting and ending vertices $s_{i}$ and $t_{i}$, respectively, $i=1,\dots,k$.

        \ENSURE A cluster general routing tour.

\textbf{begin}:\\
 \STATE Construct a new graph $\overline{G}=\cup_{i=1}^{k} \overline{C_{i}}$.
\STATE For $i=1,\dots, k$, apply Algorithm \ref{alg1} to get a path $p_{i}$ and orient the $(s_{i},t_{i})$ edge a direction, from $s_{i}$ to $t_{i}$, to obtain the arc $\overrightarrow{(s_{i},t_{i})}$.
\STATE Apply Algorithm Short-Arcs 1 and Algorithm Long-Arcs 1 for SCP on special arc $\overrightarrow{(s_{i},t_{i})}$, $i=1,\dots,k$, and output the shorter solution $T$.
\STATE In $T$, replace the special directed arc $(s_{i},t_{i})$ by the path $p_{i}$, for $i=1,\dots , k$.
\STATE \textbf{Output} the resulting tour $T_{s}$.\\
\textbf{end}\\
	\end{algorithmic}
\end{algorithm}

The main idea of Algorithm \ref{alg3} is illustrated as follows:

In Step 1, we first consider the number of connected components of $\overline{C_{i}}$. If the number is 1, it means that there exists a path from $s_{i}$ to $t_{i}$ that visits all the required edges in $E^{\prime}$ and vertices in $V^{\prime}$. When the number is more than 2, shrinking the connected components to be vertices and finding a path to connect all these vertices lead to a feasible solution.

In Step 2, by applying Algorithm \ref{alg1}, we can get a path from the starting vertex $s_{i}$ to ending vertex $t_{i}$.

In Step 3, we only need to connect these clusters to form a tour. In this progress, we can shrink the directed arc $\overrightarrow{(s_{i},t_{i})}$ and reduce the problem to an instance of TSP. Use Christofides' algorithm \cite{Christofides} for the TSP instance.

In Step 4 by replacing the special directed arc $(s_{i},t_{i})$ by path $p_{i}$, we obtain a solution to the original graph.

Our algorithm is a combinational algorithm, which deals with the condition of  the pre-specified starting and ending vertices carefully. Let $OPT$ be the cost of the optimal solution. Let $L$ be the sum of lengths of the paths of $OPT$ through each cluster and let $A$ be the length of the other edges of $OPT$ that are not in $L$. Let $D$ be the total length of the directed arcs $(s_{i},t_{i}),i=1,\dots,k$. Then, we have the following theorem:
\begin{theorem}\label{th6}($\star$)
 Let $T$ be the tour output by Algorithm \ref{alg3}. Then
$$c(T_{s})\leq2.4\:OPT.$$
\end{theorem}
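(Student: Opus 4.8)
The plan is to combine the guarantees from the path‑building routine (Algorithm \ref{alg1}, via Theorem \ref{th4}) with the two Stacker Crane subroutines (Theorem \ref{th5}), and then optimize over the split parameter. First I would fix notation consistent with the paragraph preceding the theorem: $OPT$ is the optimum of the GCRP instance, $L$ is the total length of the portions of $OPT$ lying inside the clusters, $A = OPT - L$ is the total length of the inter‑cluster edges of $OPT$, and $D = \sum_{i} c(s_i,t_i)$ is the total length of the directed arcs we create in Step~2. The first key step is to bound $\sum_i c(p_i)$: inside each cluster $\overline{C_i}$, the restriction of $OPT$ to that cluster is itself a feasible $s_i$–$t_i$ path visiting $V'_i$ and traversing $E'_i$, so $OPT_i \le$ (length of that restriction), and hence $\sum_i OPT_i \le L$. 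Applying Theorem \ref{th4} in each cluster and summing gives
\begin{equation*}
\sum_{i} c(p_i) \;\le\; \sum_i \min\Bigl\{3\,OPT_i - c(s_i,t_i),\ \tfrac32 OPT_i + \tfrac12 c(s_i,t_i)\Bigr\}.
\end{equation*}
I would keep both branches available and choose per‑cluster (or globally, whichever the intended analysis is) later; a convenient global consequence is $\sum_i c(p_i) \le 3L - D$ and also $\sum_i c(p_i) \le \tfrac32 L + \tfrac12 D$.

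The second key step handles the Stacker Crane instance in Step~3. Contracting each $p_i$ to the arc $\overrightarrow{(s_i,t_i)}$ yields an SCP instance in which the optimal value is at most $A + D$: the inter‑cluster edges of $OPT$ together with the $k$ arcs form a feasible closed walk traversing every arc in the correct direction, because $OPT$ visits each cluster consecutively and (by feasibility of the GCRP instance) enters at $s_i$ and leaves at $t_i$. Hence with $OPT_{SCP} \le A + D$ and $A_{SCP} := OPT_{SCP} - D \le A$, Theorem \ref{th5} gives a tour $T$ in the contracted graph with
\begin{equation*}
c(T) \;\le\; \min\Bigl\{\tfrac32 A + 2D,\ 3A + D\Bigr\}.
\end{equation*}
The third step is the substitution in Step~4: replacing each arc $(s_i,t_i)$ by $p_i$ changes the cost by $\sum_i c(p_i) - D$, so
\begin{equation*}
c(T_s) \;=\; c(T) - D + \sum_i c(p_i) \;\le\; \min\Bigl\{\tfrac32 A + 2D,\ 3A + D\Bigr\} - D + \sum_i c(p_i).
\end{equation*}

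The final step is the case analysis that produces the constant $2.4$. I would split on the size of $D$ relative to $OPT$ (mirroring the $D \lessgtr \tfrac35 OPT$ threshold built into the SCP subroutines, but likely with a different cutoff tuned here). Using $L = OPT - A$ and the two bounds on $\sum_i c(p_i)$, one gets, for instance, $c(T_s) \le \tfrac32 A + 2D + (3L - D) - D = \tfrac32 A + 3L = 3\,OPT - \tfrac32 A$ on one side, and $c(T_s) \le 3A + D - D + (\tfrac32 L + \tfrac12 D) = 3A + \tfrac32 L + \tfrac12 D$ on the other, and a third estimate from pairing the other branches; taking the minimum and maximizing over the free parameters $A, L, D$ subject to $A + L = OPT$ and $0 \le D \le $ (appropriate bound, e.g. $D \le 2L$ since each $c(s_i,t_i)$ is at most the length of the corresponding path in $OPT$ up to triangle inequality, actually $D\le L$) should pin the worst case at $2.4\,OPT$. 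The main obstacle I anticipate is exactly this optimization bookkeeping: one must choose the right threshold on $D$, decide whether the $\min$ in Theorem \ref{th4} is taken per cluster or after summing, and verify that the constraints relating $D$ to $L$ (which come from triangle inequality applied to each cluster path) are strong enough — if $D$ could be as large as $3L$ the bound would degrade, so the crux is showing $D \le L$ (or whatever inequality the authors use) and then checking that $2.4$ is the true maximum of the resulting piecewise‑linear function rather than a loose over‑estimate.
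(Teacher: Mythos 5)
Your proposal follows the paper's proof almost exactly: the same decomposition of $OPT$ into $L$ and $A$, the same per-cluster feasibility argument giving $\sum_i OPT_i\le L$ so that Theorem~\ref{th4} yields $\sum_i c(p_i)\le\min\{3L-D,\tfrac32L+\tfrac12D\}$, the same observation that the contracted SCP instance has optimum at most $A+D$ so Theorem~\ref{th5} yields $\min\{\tfrac32A+2D,\,3A+D\}$, and the same combination identity $c(T_s)=c(S)-D+\sum_i c(p_i)$. The only place you diverge is the final step, which you leave open (``should pin the worst case at $2.4\,OPT$'') while worrying about whether $D\le L$ is needed. The paper closes this step without any case analysis or any constraint on $D$: since a minimum is bounded by every convex combination, it takes
\begin{align*}
\sum_i c(p_i)&\le\tfrac35(3L-D)+\tfrac25\bigl(\tfrac32L+\tfrac12D\bigr)=\tfrac{12}{5}L-\tfrac25D,\\
c(S)&\le\tfrac25\bigl(\tfrac32A+2D\bigr)+\tfrac35(3A+D)=\tfrac{12}{5}A+\tfrac75D,
\end{align*}
and the weights are chosen precisely so that $-\tfrac25D-D+\tfrac75D=0$, giving $c(T_s)\le\tfrac{12}{5}(L+A)=2.4\,OPT$ directly. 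Your instinct that the crux is ``optimization bookkeeping'' is right, and your worry about $D$ versus $L$ is a red herring: no such inequality is used. If you prefer a case analysis, the correct pairing is Short-Arcs with the $\tfrac32L+\tfrac12D$ path bound (giving $\tfrac32OPT+\tfrac32D$) and Long-Arcs with the $3L-D$ bound (giving $3\,OPT-D$), split at $D=\tfrac35OPT$; the two pairings you computed explicitly do not each stay below $2.4\,OPT$ on their own.
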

For the second case, there exists required edges in $E^{\prime}$ between different clusters. If there exist required edges incident with two different clusters, they must be $(t_{i},s_{j})$ edges. First, we need to compute the number of $(t_{i},s_{j})$ edges. Suppose the number is $k$. If $k=0$, it is just Case 1. If $k\geq2$, we then get $k+1$ components and we can shrink the components and go back to Case 1 again.

According to Theorem \ref{th6}, for the general cluster routing problem with pre-specified vertices, we now get a 2.4-approximation combinatorial algorithm.

\subsection{The general cluster routing problem without specifying starting and ending vertices}
In this section, we consider the version of GCRP where, for each cluster $C_{i}$ we are free to choose the starting and ending vertices. We consider the two cases again. In the first case, all required edges in $E^{\prime}$ are only distributed within the clusters. In the second case, there exist  some required edges incident with some different clusters.

For every cluster $C_{i}$, $i\in [k]$, we consider GCRP on the new graph $\overline{G}=\cup \overline{C_{i}}$ defined as before:
$$\overline{C_{i}}=(\overline{V_{i}},\overline{E_{i}})=(\{v|v\in e\in E^{\prime}_{i}\}\cup V^{\prime}_{i}, E^{\prime}_{i}).$$

We first consider the connected components of $\overline{C_{i}}$. In order to obtain the resulted tour, the degree of every vertex of the tour must be even.
Therefore, there also exist some cases that the tour cannot exist, i.e.,  there exists a vertex $v\in\overline{V_{i}}$ with degree $d(v)>2$ (in such a case, at least one of the clusters must be visited more than once).
Henceforth, we will assume that the desired tour exists.

To solve the first case when all required edges in $E^{\prime}$ are only distributed within the clusters, we propose an algorithm which computes two different solutions. Then we select the shorter one of these two tours. To get the first solution, by using Algorithm \ref{alg1} with unspecified ends, we can find paths within each cluster. Then we can view this as a Rural Postman Problem instance. To get the second solution, for each cluster, we select two vertices $s_{i}$ and $t_{i}$ such that $c(s_{i},t_{i})$ is maximized. Let them be the end vertices of each cluster. Then we can apply Algorithm \ref{alg3} to get the second tour. Finally, we select the shorter tour.

The algorithm for the case when the tour exists can be described as follows: 
\begin{algorithm}[H]
	\renewcommand{\algorithmicrequire}{\textbf{Input:}}
    \renewcommand{\algorithmicensure}{\textbf{Output:}}
	\caption{Algorithm of unspecified ending vertices}
	\begin{algorithmic}[1]\label{alg4}
		\REQUIRE
\STATE An edge-weighted graph $G=(V,E,c)$, $V^{\prime}\subseteq V$, $E^{\prime}\subseteq E$.
\STATE A partition of $V$ into clusters $C_{1},\dots,C_{k}$.

        \ENSURE A cluster general routing tour.


\textbf{begin}:\\
\STATE Consider the new graph $\overline{C_{i}}$, for $i\in [k]$.
\STATE Apply Algorithm \ref{alg1} with unspecified end vertices in each cluster $\overline{C_{1}},\dots ,\overline{C_{k}}$. Let path $p_{i}$ be the resulting path on $\overline{C_{i}}$, and denote its end vertices by $a_{i}$ and $b_{i}$. Apply Algorithm Long-Arc 2 and Algorithm Short-Arc 2 to output the shorter solution for RPP with special edges $(a_{i},b_{i})$ and let $T_{1}$ be the tour obtained by replacing special edge $(a_{i},b_{i})$ by path $p_{i}$, for $i\in [k]$.
\STATE In each cluster find vertices $s_{i}$ and $t_{i}$ that maximize $c(s_{i},t_{i})$, for $i\in [k]$.  Apply Algorithm \ref{alg3} with the end vertices $\{s_{i},t_{i}\}$ to output a tour $T_{2}$
\STATE \textbf{Output} the shorter of $T_{1}$ and $T_{2}$.\\
\textbf{end}\\
	\end{algorithmic}
\end{algorithm}


We will analyze the approximation ratio of Algorithm \ref{alg4}. We first introduce some notations. As in the previous section, let $L$ denote the sum of the lengths of the Hamiltonian paths within the clusters in $OPT$, and let $A$ denote the sum of the lengths of the remaining edges of $OPT$. Let $D=\sum^{k}_{i=1}c(s_{i},t_{i})$ denote the sum cost of required edges. The first algorithm works well when $D$ is small, and the second works well when $D$ is large.
\begin{theorem}\label{th7} ($\star$) Let $T_{1}$ be the tour computed in Step 2 of Algorithm \ref{alg4}. Then we have
$$c(T_{1})\leq\frac{3}{2}OPT+\frac{1}{2}L+2D.$$
\end{theorem}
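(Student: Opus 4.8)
The plan is to bound $c(T_1)$ by summing the contributions of the within-cluster paths $p_i$ and the connecting edges produced by the RPP subroutine, and then relating both to the parameters $L$, $A$, $D$ of the optimal tour.

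First I would treat each cluster separately. Within $\overline{C_i}$, Algorithm~\ref{alg1} is run with \emph{unspecified} end vertices, so by Theorem~\ref{th4} (applied to the free-ends version, analogously to Property 1 / Case (3) of TSPP, or directly by the bound $c(p_i)\le \tfrac32\,\mathrm{OPT}_i$ where $\mathrm{OPT}_i$ is the optimal path cost in $\overline{C_i}$) we obtain $c(p_i)\le \tfrac32\,L_i$, where $L_i$ is the length of the Hamiltonian path of the optimal GCRP tour through $C_i$; note $L_i$ is a feasible solution to the subproblem on $\overline{C_i}$, so $\mathrm{OPT}_i\le L_i$ and thus $\sum_i c(p_i)\le \tfrac32 L$. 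Here I would also record that the end vertices $a_i,b_i$ of $p_i$ satisfy $c(a_i,b_i)\le c(s_i^\ast,t_i^\ast)$ where $s_i^\ast,t_i^\ast$ maximize the distance in $C_i$; but for $T_1$ this bound is not needed — what matters is $D=\sum_i c(s_i,t_i)$ as defined by the maximizing pair, which we only use as an upper estimate via $c(a_i,b_i)\le$ (something controllable).

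Second I would analyze the connecting step. After replacing each $p_i$ by the special edge $(a_i,b_i)$ — more precisely, Step 2 runs the RPP subroutines (Short-Arcs~2 / Long-Arcs~2) treating the $(a_i,b_i)$ as the required edges — the ``$D$'' of that RPP instance is $\sum_i c(a_i,b_i)$, and its optimum $\mathrm{OPT}_{\mathrm{RPP}}$ is at most $A + \sum_i c(a_i,b_i)$, since contracting the optimal GCRP tour's cluster-paths to single edges and shortcutting (triangle inequality) yields a feasible RPP walk of that length whose non-required part costs at most $A$. By the Long-Arcs~2 guarantee (the analogue of the second half of Theorem~\ref{th5}: walk length $\le 3A' + D'$ with $A' = \mathrm{OPT}_{\mathrm{RPP}} - D'$, $D' = \sum_i c(a_i,b_i)$), together with the Short-Arcs guarantee, and taking the better of the two, I would get a bound on the RPP walk of the form $\le \tfrac32 A' + 2D'$ or $\le 3A' + D'$; combined with $A'\le A$ and $D' = \sum c(a_i,b_i)$, and then replacing each $(a_i,b_i)$ back by $p_i$ (which \emph{removes} $c(a_i,b_i)$ and \emph{adds} $c(p_i)$), the total cost of $T_1$ is at most $\big[\text{RPP walk} - \sum c(a_i,b_i)\big] + \sum c(p_i)$. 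Using $\sum c(p_i)\le \tfrac32 L$, the Short-Arcs branch $\tfrac32 A' + 2D' - D' = \tfrac32 A + D'$, and finally the crude domination $D'=\sum_i c(a_i,b_i)\le \sum_i c(s_i,t_i) = D$ together with $\tfrac32 A \le \tfrac32(A+L) \le \tfrac32 \mathrm{OPT}$ — wait, I must be careful: $\mathrm{OPT} = A + L$, so $\tfrac32 A + \tfrac32 L = \tfrac32\mathrm{OPT}$ — gives $c(T_1)\le \tfrac32\mathrm{OPT} + \tfrac12 L + 2D$ after accounting for the extra $\tfrac12 L$ coming from the difference between the $\tfrac32 L$ path bound and the $L$ already inside $\mathrm{OPT}$. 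I would present this as: $c(T_1)\le \tfrac32 A + D' + \tfrac32 L \le \tfrac32(A+L) + \tfrac12 L + (D' ) \le \tfrac32\mathrm{OPT} + \tfrac12 L + 2D$, tightening constants as the Short/Long-Arcs trade-off permits.

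The main obstacle I anticipate is the bookkeeping in the second step: correctly identifying which ``$D$'' appears in the Short-Arcs~2 / Long-Arcs~2 estimates (it is $\sum_i c(a_i,b_i)$, governed by the algorithm's own choice of path endpoints, not the $D=\sum_i c(s_i,t_i)$ in the statement), and then justifying $\sum_i c(a_i,b_i)\le D$ or absorbing the discrepancy. One must also verify that contracting the optimal tour to an RPP-feasible walk really costs only $A$ on the connecting part — this uses the triangle inequality plus the fact that every cluster is visited consecutively in $\mathrm{OPT}$, so its restriction to $C_i$ is exactly one path and the inter-cluster portion is a single closed walk on the contracted vertices. The within-cluster bound $c(p_i)\le\tfrac32 L_i$ is the routine part, following immediately from Theorem~\ref{th4} (free-ends case) and $\mathrm{OPT}_i \le L_i$.
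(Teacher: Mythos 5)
Your first step (bounding $\sum_i c(p_i)\le \tfrac32 L$ via Theorem~\ref{th4} with free ends and the feasibility of $\mathrm{OPT}$'s restriction to each cluster) matches the paper. The gap is in your second step: you claim the RPP instance with required edges $(a_i,b_i)$ has optimum at most $A+\sum_i c(a_i,b_i)$, ``since contracting the optimal GCRP tour's cluster-paths to single edges and shortcutting yields a feasible RPP walk of that length whose non-required part costs at most $A$.'' This is false as stated. Contracting $\mathrm{OPT}$'s path through $C_i$ gives you the edge $(u_i,v_i)$ between the tour's \emph{entry and exit} vertices of that cluster, not the edge $(a_i,b_i)$ between the endpoints that Algorithm~\ref{alg1} happened to choose. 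To build a feasible RPP walk that actually traverses $(a_i,b_i)$, you must detour from $u_i$ to $a_i$ and from $b_i$ to $v_i$, so the correct upper bound on the RPP optimum is $A+\sum_i\bigl(c(u_i,a_i)+c(a_i,b_i)+c(b_i,v_i)\bigr)$. These detour terms are not cosmetic: they are precisely where the $2D$ in the theorem comes from, because each of $c(u_i,a_i)$ and $c(b_i,v_i)$ is bounded by $c(s_i,t_i)$, the \emph{maximum} intra-cluster distance, and $D=\sum_i c(s_i,t_i)$; moreover the sum $\sum_i\bigl(c(u_i,a_i)+c(a_i,b_i)+c(b_i,v_i)\bigr)$ is bounded by $L$ since $u_i,a_i,b_i,v_i$ appear in this order on $\mathrm{OPT}$'s path through $C_i$. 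That ordering/telescoping observation is the key idea of the paper's proof and is absent from yours.

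The symptom of the missing idea shows up in your final chain: from your (incorrect) premises you would actually derive $c(T_1)\le \tfrac32 \mathrm{OPT}+\tfrac12\sum_i c(a_i,b_i)$, which is stronger than the theorem, and you then write ``$\tfrac32 A + D' + \tfrac32 L \le \tfrac32(A+L)+\tfrac12 L + D'$,'' inserting an extra $\tfrac12 L$ that does not follow from anything, and finally replace $D'$ by $2D$ without a derivation. The correct computation is: apply the $1.5$ ratio to the RPP optimum bound above, subtract $\sum_i c(a_i,b_i)$ and add $\sum_i c(p_i)\le\tfrac32 L$ when re-expanding the special edges, and then split $\tfrac32\sum_i(c(u_i,a_i)+c(a_i,b_i)+c(b_i,v_i))-\sum_i c(a_i,b_i)$ as $\tfrac12\sum_i(\cdots)+\sum_i(c(u_i,a_i)+c(b_i,v_i))\le \tfrac12 L + 2D$. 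You should redo the argument with the $u_i,v_i$ versus $a_i,b_i$ distinction made explicit.
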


\begin{theorem}\label{th8} ($\star$)
Let $T_{2}$ be the tour computed in Step 3 of Algorithm \ref{alg4}. Then we have
\[
c(T_{2})\leq\frac{3}{2}OPT+3L-2D.
\]
\end{theorem}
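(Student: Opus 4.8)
The plan is to estimate separately the two ingredients that Algorithm~\ref{alg3} glues together when it is invoked in Step~3 of Algorithm~\ref{alg4}. Write $c_i:=c(s_i,t_i)$, so $D=\sum_{i=1}^k c_i$. By the construction of $T_2$ (Algorithm~\ref{alg3}, Steps~2--4), each arc $\overrightarrow{(s_i,t_i)}$ is used exactly once in the Stacker--Crane tour $T$ and is then replaced by $p_i$, hence
\[
c(T_2)=c(T)-D+\sum_{i=1}^k c(p_i),
\]
where $T$ is the shorter of the Short-Arcs~1 and Long-Arcs~1 solutions on the arcs $\overrightarrow{(s_i,t_i)}$, and $p_i$ is the $s_i$--$t_i$ path produced by Algorithm~\ref{alg1} inside $\overline{C_i}$. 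Fix an optimal tour; let it visit the clusters in some cyclic order, traversing inside cluster $i$ a path $P_i$ of cost $\ell_i$ with ends $u_i,w_i$, so that $\sum_i\ell_i=L$ and $A=OPT-L$ is the total cost of its inter-cluster edges. Since $s_i,t_i$ realise the diameter of $\overline{C_i}$ and every vertex of $\overline{C_i}$ (in particular $u_i,w_i$) lies on $P_i$, we get $c(x,y)\le c_i\le\ell_i$ for all $x,y\in\overline{C_i}$, and in particular $D\le L$.

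For the Stacker--Crane part I would build a feasible walk for the instance on $\overline{G}$ from the optimal tour: keep its inter-cluster edges and, inside each cluster, replace $P_i$ by the detour $u_i\to s_i\to t_i\to w_i$ (the arc in the middle). Using $c(u_i,s_i)\le c_i$ and $c(t_i,w_i)\le c_i$, this walk has cost at most $A+3D$, so $OPT_{SCP}\le A+3D$ and $A_{SCP}:=OPT_{SCP}-D\le A+2D$. Theorem~\ref{th5} then bounds $c(T)$ by the minimum of $\tfrac32 A_{SCP}+2D$ and $3A_{SCP}+D$ evaluated at $A_{SCP}\le A+2D$.

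For the path part I would use Theorem~\ref{th4}, whose only gap here is that its guarantee is in terms of the optimal $s_i$--$t_i$ general path $OPT_i$ in $\overline{C_i}$, whereas the natural comparison object is the $u_i$--$w_i$ path $P_i$ of cost $\ell_i$. I would reroute $P_i$: close it into a cycle using the edge $(u_i,w_i)$ and re-cut it so that $s_i$ and $t_i$ become the ends, an operation that inserts at most two chords of $\overline{C_i}$, each of cost $\le c_i$; this gives $OPT_i\le\ell_i+O(c_i)$ (and with the favourable naming of the diameter pair the penalty can be reduced). Feeding this into Theorem~\ref{th4} gives, for each $i$, a bound on $c(p_i)$ of the form $\min\{3\ell_i-c_i+O(c_i),\ \tfrac32\ell_i+O(c_i)\}$, and summing over $i$ yields a bound on $\sum_i c(p_i)$ that is linear in $L$ and $D$, with the coefficient of $\ell_i$ decreasing as $c_i$ grows.

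It remains to combine the two estimates. Since Algorithm~\ref{alg3} outputs the shorter Stacker--Crane tour and Algorithm~\ref{alg1} outputs the shorter of its two paths, in each regime we may select whichever branch of Theorems~\ref{th5} and~\ref{th4} is favourable; a short case analysis on the size of $D$ relative to $L$ and $A$ should then produce the claimed inequality. I expect this last step to be the main obstacle: the target bound $\tfrac32 OPT+3L-2D$ carries a genuinely negative $D$-term, so the combination must actually profit from $D$ being large --- precisely the regime in which $T_1$ (Theorem~\ref{th7}) is weak --- which means the savings hidden in ``$c_i$ large forces $c(p_i)$ close to $\ell_i$, and forces the detour and rerouting penalties to be charged against $\ell_i$'' have to be extracted carefully so that the constants coming from Frederickson's algorithm and from Theorem~\ref{th4} cancel exactly. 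The choice of $s_i,t_i$ as a diameter pair is used twice: to bound every inserted edge by $c_i$, and to make $T_2$ complement $T_1$.
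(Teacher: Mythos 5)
Your high-level decomposition $c(T_2)=c(\text{connector})-D+\sum_i c(p_i)$ is the same as the paper's, but both of your individual estimates are too weak to produce the stated bound, and you correctly sense this yourself when you leave the final combination open. The missing idea is the one the paper already uses in the proof of Theorem~\ref{th7}: since $s_i$ and $t_i$ are required vertices, they both lie on the optimal tour's intra-cluster path $P_i$ from $u_i$ to $w_i$, say in the order $u_i,s_i,t_i,w_i$; short-cutting each of the three segments of $P_i$ gives $c(u_i,s_i)+c(s_i,t_i)+c(t_i,w_i)\le \ell_i$, \emph{not} merely $\le 3c_i$. Hence the connector instance with special edges $(s_i,t_i)$ has optimum at most $A+\sum_i\ell_i=OPT$, and the $1.5$-ratio algorithm of Remark~\ref{re1} returns a connector of cost at most $\tfrac32\,OPT$. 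Your bound $OPT_{SCP}\le A+3D$ loses exactly this: fed into Theorem~\ref{th5} it gives $c(T)\le\tfrac32(A+2D)+2D=\tfrac32A+5D$, and no case analysis can turn a $+5D$ into the $-2D$ that the theorem requires (take $D$ close to $L$ to see the combination fail).

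The intra-cluster part also does not need your cycle-and-recut argument, which introduces uncontrolled $O(c_i)$ penalties. Since $P_i$ spans every vertex of $\overline{C_i}$ and traverses every edge of $E_i'$, one has $MST(\overline{C_i})\le\ell_i$ and $c(E_i')\le\ell_i$, so the first guarantee of Theorem~\ref{th4} gives directly
\begin{align*}
\sum_{i=1}^{k}c(p_i)\;\le\;\sum_{i=1}^{k}\bigl(3\,MST(\overline{C_i})-c(s_i,t_i)\bigr)\;\le\;3L-D,
\end{align*}
with no comparison to the optimal $s_i$--$t_i$ path needed. Combining, $c(T_2)\le\tfrac32\,OPT-D+(3L-D)=\tfrac32\,OPT+3L-2D$; note that both occurrences of $-D$ come for free from deleting the special edges and from the $-c(s_i,t_i)$ term of Theorem~\ref{th4}, not from any delicate cancellation against Frederickson's constants. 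As written, your proposal has a genuine gap at both estimates and does not close.
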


Now we can get the following theorem:
\begin{theorem}\label{th14}
Let $T$ be the tour returned by Algorithm \ref{alg4}, then
$$c(T)\leq\frac{13}{4}OPT.$$
\end{theorem}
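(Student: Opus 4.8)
The plan is to combine Theorems~\ref{th7} and~\ref{th8} by taking the shorter of the two tours $T_1$ and $T_2$, exactly as Algorithm~\ref{alg4} does. Since $c(T) = \min\{c(T_1), c(T_2)\}$, for any weight $\lambda \in [0,1]$ we have $c(T) \le \lambda\, c(T_1) + (1-\lambda)\, c(T_2)$. Plugging in the bounds gives
\[
c(T) \le \frac{3}{2}OPT + \lambda\left(\frac12 L + 2D\right) + (1-\lambda)\left(3L - 2D\right).
\]
First I would collect the coefficients of $L$ and of $D$ separately: the coefficient of $L$ is $\frac{\lambda}{2} + 3(1-\lambda) = 3 - \frac{5\lambda}{2}$, and the coefficient of $D$ is $2\lambda - 2(1-\lambda) = 4\lambda - 2$. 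The goal is to choose $\lambda$ so that the resulting bound is a clean multiple of $OPT$ regardless of the actual values of $L$ and $D$.

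The key observation is that $L$ and $D$ are not free parameters: they satisfy the structural inequalities $L \le OPT$ (the cluster paths are part of $OPT$, and the remaining edges $A \ge 0$ satisfy $OPT = L + A$), and crucially $D \ge L$ because $D = \sum_i c(s_i,t_i)$ where $s_i,t_i$ were chosen in Step~3 to maximize $c(s_i,t_i)$, so each $c(s_i,t_i)$ is at least the distance between the true endpoints of the $OPT$-path through $C_i$, which by the triangle inequality is at most the length of that path; summing over $i$ gives $D \ge L$. (One should double-check this direction of the inequality — it is the one used implicitly in the proof of Theorem~\ref{th8} and is the natural reason that algorithm is good when $D$ is large.) With $D \ge L \ge 0$ and $L \le OPT$ in hand, I would want the combined error term $\bigl(3 - \frac{5\lambda}{2}\bigr)L + (4\lambda - 2)D$ to be bounded by a multiple of $OPT$ using only these facts.

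The natural choice is to make the $D$-coefficient vanish or to balance it against the $L$-coefficient via $D \ge L$. Setting $\lambda = \tfrac12$: the $L$-coefficient becomes $3 - \tfrac54 = \tfrac74$ and the $D$-coefficient becomes $0$, giving $c(T) \le \tfrac32 OPT + \tfrac74 L \le \tfrac32 OPT + \tfrac74 OPT = \tfrac{13}{4} OPT$, which is exactly the claimed bound. So the proof is: take $\lambda = \tfrac12$, average the two bounds of Theorems~\ref{th7} and~\ref{th8}, observe the $D$ terms cancel, and apply $L \le OPT$. I would present it as the two-line display above specialized to $\lambda=\tfrac12$, followed by the substitution $L\le OPT$.

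The main obstacle — really the only subtle point — is justifying that we may simply bound $\tfrac74 L$ by $\tfrac74 OPT$ and that no constraint involving $D$ is needed once the $D$-coefficient is zero; here $D \ge 0$ trivially suffices since its coefficient vanished, so we do not even need $D \ge L$ for this particular averaging. (If one instead tried to optimize $\lambda$ further using $D \ge L$, one could ask whether a smaller constant is attainable, but with the bounds as stated, $\lambda = \tfrac12$ already yields the target $\tfrac{13}{4}$, so no delicate case analysis on the size of $D$ relative to $OPT$ is required.) I would just need to make sure the inequality $L \le OPT$ is stated cleanly — it follows from $OPT = L + A$ with $A \ge 0$, which was the definition of $A$ in the paragraph preceding Theorem~\ref{th7}.

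\begin{proof}
Let $T_1$ and $T_2$ be the tours from Steps~2 and~3 of Algorithm~\ref{alg4}, and recall $c(T)=\min\{c(T_1),c(T_2)\}$, so $c(T)\le \tfrac12 c(T_1)+\tfrac12 c(T_2)$. By Theorems~\ref{th7} and~\ref{th8},
\[
c(T)\le \frac12\left(\frac32 OPT+\frac12 L+2D\right)+\frac12\left(\frac32 OPT+3L-2D\right)=\frac32 OPT+\frac74 L.
\]
Since $OPT=L+A$ with $A\ge 0$, we have $L\le OPT$, hence $c(T)\le \frac32 OPT+\frac74 OPT=\frac{13}{4}OPT$. \qed
\end{proof}
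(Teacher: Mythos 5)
Your proof is correct and is essentially the paper's own argument in a slightly different guise: the paper splits into the cases $2D\le\frac54 L$ and $2D\ge\frac54 L$, while you take the equal-weight convex combination of the bounds from Theorems~\ref{th7} and~\ref{th8} so that the $D$ terms cancel; both reduce to $c(T)\le\frac32 OPT+\frac74 L$ and then use $L\le OPT$. (Your side remark about $D\ge L$ is both unnecessary and of doubtful validity, but since the $D$-coefficient vanishes at $\lambda=\tfrac12$ it plays no role in your final proof.)
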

\begin{proof} Note that $L\leq OPT$. If $2D\leq\frac{5}{4}L$, Theorem \ref{th7} implies that
\[
c(T_{1})\leq\frac{3}{2}OPT+\frac{7}{4}L\leq\frac{13}{4}OPT.
\]
Otherwise, when $2D\geq\frac{5}{4}L$, Theorem \ref{th8} implies that
\[
c(T_{2})\leq\frac{13}{4}OPT.
\]
Since the algorithm chooses the shorter one between the tours $T_{1}$ and $T_{2}$, the proof is completed.\qed
\end{proof}

Next, we will consider Case 2 when there exist required edges between clusters.

We consider the problem in three different cases. In the first case, the number of required edges incident with different clusters is $k$. In the second case, some clusters have two required edges incident with other clusters. In the third case, the number of clusters with required edges incident with other cluster is 0. 

In the first case, we only need to find paths between each specified vertices. This can be seen as an instance of Travelling Salesman Path Problem as we described before. So the performance ratio of this case is 1.5.

In the second case, for the clusters which have two required edges incident with other clusters, we find paths in them and it becomes the third case.

Without loss of generality, we consider the third case: the number of clusters with required edges incident to other cluster is 0. For every cluster $C_{i}$, we denote the specified vertex as $a_{i}$. First, in each cluster, by computing the distance between each component, we select the longest one; that is, we find the vertex $b_{i}$ such that $c(a_{i},b_{i})$ is maximum. This can be done in polynomial time, because the number of vertices in each cluster is no more than $n$. Then we can find the path $p_{i}$ in each cluster $C_{i}$ by Algorithm \ref{alg1}. Since this problem has no direction, we apply Algorithm Long-Arc 2 and Algorithm Short-Arc 2 to output the shorter solution for RPP and find the tour with the edge $(a_{i},b_{i})$. At last, we replace the edge$(a_{i},b_{i})$ by path $p_{i}$. The whole algorithm can be described as follows:
\begin{algorithm}[H]
	\renewcommand{\algorithmicrequire}{\textbf{Input:}}
    \renewcommand{\algorithmicensure}{\textbf{Output:}}
	\caption{Algorithm of existing required edges between clusters}
	\begin{algorithmic}[1]\label{alg5}
		\REQUIRE
\STATE An edge-weighted graph $G=(V,E,c)$, $V^{\prime}\subseteq V$, $E^{\prime}\subseteq E$.
\STATE A partition of $V$ into clusters $C_{1},\dots ,C_{k}$.

        \ENSURE A cluster general routing tour.


\textbf{begin}:\\
\STATE  Let the vertex adjacent to required edges between different cluster $C_{i}$ be $a_{i}$. Find a vertex that maximize $c(a_{i},b_{i})$, for $i=1,\dots,k$.
\STATE For each $\overline{C_{i}}$, compute a path $p_{i}$, a Hamiltonian path with end vertices $a_{i}$ and $b_{i}$, for $i=1,\dots,k$.
\STATE Apply Algorithm Long-Arc 2 and Algorithm Short-Arc 2 to output the shorter solution for RPP with the special edges $\{(a_{i},b_{i})|i=1,\dots,k\}$ to obtain tour $S$, for $i=1,\dots,k$.
\STATE In $T$, replace the special edge $(a_{i},b_{i})$  by the path $p_{i}$, for $i=1,\dots,k$.
\STATE \textbf{return} the resulting tour $T$.\\
\textbf{end}\\
	\end{algorithmic}
\end{algorithm}
\begin{theorem}\label{th13} ($\star$)
 Let $T$ be the tour output by Algorithm \ref{alg5}. Then
$c(T)\leq\frac{9}{4}OPT.$
\end{theorem}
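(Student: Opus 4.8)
The plan is to follow the template of the proofs of Theorems~\ref{th6}, \ref{th7} and \ref{th8}: bound the within-cluster paths $p_i$ against the cluster portions of an optimal tour, bound the connecting walk produced by the Rural Postman subroutines against the remaining edges of an optimal tour, and then trade the two estimates off against each other by a case analysis on the total length $D=\sum_{i=1}^{k}c(a_i,b_i)$ of the special edges. Write $OPT=L+A$, where $L=\sum_i c(P_i^{*})$ is the total length of the Hamiltonian paths $P_i^{*}$ that an optimal tour runs inside the clusters and $A$ is the total length of its edges joining distinct clusters. A preliminary observation organizes everything: in the setting of Algorithm~\ref{alg5} the distinguished vertex $a_i$ of $C_i$ is forced to be one of the two endpoints of $P_i^{*}$ (it carries an inter-cluster required edge), while $b_i$ is chosen to maximize $c(a_i,\cdot)$. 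Since every vertex of $C_i$ lies on $P_i^{*}$, the triangle inequality gives $c(a_i,b_i)\le c(P_i^{*})$ and hence $D\le L\le OPT$; this relation is what will make the final case split close.

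First I would bound $\sum_i c(p_i)$. The path $p_i$ is produced by Algorithm~\ref{alg1} with the prescribed ends $a_i$ and $b_i$, so Theorem~\ref{th4} applies inside $\overline{C_i}$: $c(p_i)\le\min\{\,3\,OPT_i-c(a_i,b_i),\ \tfrac32 OPT_i+\tfrac12 c(a_i,b_i)\,\}$, where $OPT_i$ is the cheapest $a_i$--$b_i$ Hamiltonian path of $\overline{C_i}$ still traversing all required edges. The key lemma is a bound of the form $OPT_i\le c(P_i^{*})+c(a_i,b_i)$ (with the sharper $OPT_i\le\tfrac32 c(P_i^{*})$ when no required edge meets $b_i$): starting from $P_i^{*}$, which already has $a_i$ as an endpoint, one detaches the interior occurrence of $b_i$ (or reroutes the segment of $P_i^{*}$ past $b_i$) and reattaches $b_i$ as the second endpoint, absorbing the reconnection cost via the triangle inequality; taking $b_i$ among the low-degree vertices, as in Algorithm~\ref{alg1}, guarantees that at most one required edge meets $b_i$, so the surgery never deletes a required edge. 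Summing then gives a bound on $\sum_i c(p_i)$, and hence on $\sum_i (c(p_i)-c(a_i,b_i))$, that is linear in $L$ and $D$.

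Next I would bound the length of the walk $W$ returned by Algorithm Long-Arcs~2 / Algorithm Short-Arcs~2 on the Rural Postman instance whose special edges are $\{(a_i,b_i)\}$ (of total length $D$). Its optimum $OPT_{\mathrm{RPP}}$ is bounded in two ways. Using the maximality of $c(a_i,b_i)$, the detour that enters $C_i$ at $a_i$, traverses $(a_i,b_i)$, returns to $a_i$ and leaves, costs at most $2c(a_i,b_i)$ inside $C_i$ and lengthens the relevant inter-cluster edge by at most $c(a_i,b_i)$ (triangle inequality), so summing over the clusters of an optimal tour gives $OPT_{\mathrm{RPP}}\le A+3D$; the cruder detour that follows $P_i^{*}$ gives $OPT_{\mathrm{RPP}}\le A+L=OPT$. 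Hence $A_{\mathrm{RPP}}:=OPT_{\mathrm{RPP}}-D\le A+\min\{L,3D\}-D$, and by the Rural Postman analogue of Theorem~\ref{th5} (and Remark~\ref{re1}) $c(W)\le\min\{\,3A_{\mathrm{RPP}}+D,\ \tfrac32 A_{\mathrm{RPP}}+2D\,\}$, since Algorithm~\ref{alg5} keeps the shorter walk.

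Finally, $T$ is obtained from $W$ by replacing each special edge $(a_i,b_i)$ with $p_i$, so $c(T)=c(W)-D+\sum_i c(p_i)$. Substituting the estimates above and splitting into the regime where $D$ is small -- using Short-Arcs~2 together with the $\tfrac32 OPT_i+\tfrac12 c(a_i,b_i)$ branch of Theorem~\ref{th4} -- and the regime where $D$ is large -- using Long-Arcs~2 together with the $3\,OPT_i-c(a_i,b_i)$ branch -- and balancing the two resulting linear expressions in $L$, $A$ and $D$ with the help of $D\le L\le OPT$, yields $c(T)\le\tfrac94 OPT$ at the common threshold value. The hard part is the within-cluster lemma of the second step -- isolating exactly the right combination of $c(P_i^{*})$ and $c(a_i,b_i)$ that bounds $OPT_i$ while never discarding a required edge -- together with the bookkeeping in the last step: the maximality of $c(a_i,b_i)$ must be exploited both when bounding $\sum_i c(p_i)$ and when bounding $c(W)$ in order for both regimes of the case analysis to close at $\tfrac94\,OPT$ rather than at a weaker constant.
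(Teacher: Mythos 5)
Your overall skeleton is the one the paper uses: build the in-cluster $a_i$--$b_i$ paths with Algorithm \ref{alg1} and Theorem \ref{th4}, connect them with the Short-Arcs~2 / Long-Arcs~2 Rural Postman subroutines, substitute $p_i$ for the special edge $(a_i,b_i)$ so that $c(T)=c(W)-D+\sum_i c(p_i)$, and play the two branches of each $\min$ against one another (the paper takes the convex combination with weights $\frac{1}{2},\frac{1}{2}$ rather than a case split on $D$, which is the same device). The genuine gap is that your quantitative intermediate bounds do not combine to $\frac{9}{4}OPT$, and you never carry out the final computation that would reveal this. The paper's argument closes only because its two estimates, $c(P)\leq\min\{3L-D,\frac{3}{2}L+\frac{1}{2}D\}\leq\frac{9}{4}L-\frac{1}{4}D$ and $c(S)\leq\min\{\frac{3}{2}(A+D),3A+D\}\leq\frac{9}{4}A+\frac{5}{4}D$, have $D$-coefficients that cancel exactly against the $-D$ of the substitution: $-\frac{1}{4}D-D+\frac{5}{4}D=0$. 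Your per-cluster lemma $OPT_i\leq c(P_i^{*})+c(a_i,b_i)$ turns the second branch of Theorem \ref{th4} into $\frac{3}{2}c(P_i^{*})+2c(a_i,b_i)$, i.e.\ $\sum_i c(p_i)\leq\frac{3}{2}L+2D$, and your Rural Postman bound $OPT_{\mathrm{RPP}}\leq A+\min\{L,3D\}$ gives at best $c(W)\leq\frac{3}{2}(A+L-D)+2D=\frac{3}{2}A+\frac{3}{2}L+\frac{1}{2}D$ in the small-$D$ regime; substituting, $c(T)\leq\frac{3}{2}A+3L+\frac{3}{2}D$, which already at $D=0$ exceeds $\frac{9}{4}(L+A)$ whenever $L>A$. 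The large-$D$ regime is worse (an extra $+3\min\{L,3D\}$ appears in the Long-Arcs branch). So ``balancing the two regimes at the common threshold'' is not a computation you can actually perform with the estimates you state; the constant $\frac{9}{4}$ does not come out.

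To be fair, you have put your finger on exactly the points where the paper's own proof is thin: the inequality $c(P)\leq\min\{3L-D,\frac{3}{2}L+\frac{1}{2}D\}$ tacitly assumes that each cluster admits an $a_i$--$b_i$ Hamiltonian path through the required edges of cost at most $L_i=c(P_i^{*})$, even though the optimal tour need not leave $C_i$ at $b_i$, and the claim that the RPP optimum is at most $A+D$ is justified only by the assertion ``$A+D=OPT$,'' which is inconsistent with $OPT=L+A$ unless $D=L$. But identifying these weaknesses and then replacing the paper's estimates by strictly weaker ones is not a proof of the stated ratio: to recover $\frac{9}{4}$ you would have to actually establish the paper's sharper per-cluster and RPP bounds (presumably exploiting that $a_i$ is forced and that $b_i$ maximizes $c(a_i,\cdot)$), not merely the relaxed versions you propose.
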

%
%

Algorithm \ref{alg5} is a $2.25$-approximation algorithm for the  general cluster routing problem with unspecified end vertices, in which some required edges may be incident with different clusters. Therefore, the performance ratio of approximation algorithm for the problem with unspecified vertices is $3.25$.
\section{Conclusion}\label{sec:c}
In this paper, we present constant approximation algorithms for two variations of the cluster general routing
problem. However, the two presented algorithms have different approximation ratio, and in our future work we will consider whether we can design approximation algorithms with the same approximation ratio for these two problems.

%
%
%

\newpage
\section*{Appendix}
{\bf Proof of Theorem \ref{th4}}

Let $S$ be the path output by Algorithm \ref{alg1}. Then we have
\begin{align*}
c(S)\le \min\left\{3OPT-c(s,t), \frac{3}{2}OPT+\frac{1}{2}c(s,t)\right\}.
\end{align*}
\begin{proof}
Note that we copy the edges of $MST(G^{*})$ except for those on the $s-t$ path. Therefore $S$ is a connected multigraph whose vertex degrees are all even except for those of $s$ and $t$. In this multigraph we find an Eulerian walk between $s$ and $t$. By the standard Short-cut procedure \cite{Lovas}, we can turn it into a Hamiltonian path between $s$ and $t$ without increasing the weight. Since the edge costs satisfy triangle inequality, we know that the length is at most $2MST(G^{*})-c(s,t)$. For the graph $G^{\prime}$, we can compute the distance between each pair of connected components in polynomial time. So when we retrieve the vertices the connected parts of original graph $G^{\prime}$ from $G^{*}$, we can shortcut the costs of the edges in $E^{\prime}$ without adding cost. Therefore, we have $c(E^{\prime})< OPT$, implying that the length of $S$ is at most $3OPT-c(s,t)$.

Moreover, by adding the edge $s-t$ to $OPT$ to make it a cycle, the length of this cycle is $OPT + c(s,t)$. Decompose the cycle into two matchings. The length of the smaller matching is at most $\frac12(OPT+c(s,t))$. Consider a minimum spanning tree of $G^{*}$. If the vertex is a component of the original graph $G^{\prime}$, then recover the original path. Add a minimum-weight matching of vertices to the minimum spanning tree. Then we obtain an upper bound for any feasible solution. Applying that Short-cut procedure \cite{Lovas} again to get that $c(S)\le MST(G^{\prime})+\frac12(OPT+c(s,t))$ due to the triangle inequality.  So the value of the feasible solution output by Algorithm \ref{alg1} is $\frac{3}{2}OPT+\frac12c(s,t).$
\end{proof}

\noindent{\bf Proof of Theorem \ref{th6}}

 Let $T$ be the tour output by Algorithm \ref{alg3}. Then
$$c(T_{s})\leq2.4\:OPT.$$
\begin{proof} In Step 2, by Algorithm \ref{alg1}, we can generate a path $p_{i}$ in each cluster $\overline{C_{i}}$. In these paths, all required parts in $E^{\prime}$ and $V^{\prime}$ are visited. By Theorem \ref{th4}, the lengths of the two solutions to the travelling salesman path problem with given starting and ending vertices are at most $3L-D$ and $\frac{3}{2}L+D$, respectively. Let $P$ be the solution output by Algorithm \ref{alg1}. Because the minimum of a set of quantities is no more than their convex combination, we have
\begin{align*}
~~~c(P) &\leq \min\left(3L-D,\frac{3}{2}L+\frac{1}{2}D\right)\\
&\leq \frac{3}{5}(3L-D)+\frac{2}{5}\left(\frac{3}{2}L+\frac{1}{2}D\right)\\
&=\frac{12}{5}L-\frac{2}{5}D.
\end{align*}

Since $A$ denotes the length of the other edges of $OPT$ that are not in $L$ and $D$ is the total length of the directed arcs $\overrightarrow{(s_{i},t_{i})},i=1,\dots,k$, it follows that there exists a solution to the Stack Crane Problem of length at most $A+D$. By Theorem \ref{th5}, the lengths of the two solutions to the Stack Crane Problem are at most $\frac{3}{2}A+2D$ and $3A+D$, respectively. Let $S$ be the solution output by the algorithm for the Stack Crane Problem. Because the minimum of a set of quantities is no more than their convex combination, we have
\begin{align*}
~~~c(S) &\leq \min\left(\frac{3}{2}A+2D,3A+D\right)\\
&\leq \frac{2}{5}\left(\frac{3}{2}A+2D\right)+\frac{3}{5}(3A+D)\\
&=\frac{12}{5}A+\frac{7}{5}D.
\end{align*}
Finally, we combine the two solutions by replacing arcs of length $D$ in the travelling salesman path problem solution by the shorter solution output by Algorithm Long-Arc 1 and Algorithm Short-Arc 1 for the Stack Crane Problem. We obtain an upper bound on the length of the solution $T_{s}$ by combining the above inequalties.
\begin{align*}
~~~c(T_{s}) &= c(P)-D+c(S)\\
&\leq \frac{12}{5}L-\frac{2}{5}D-D+\frac{12}{5}A+\frac{7}{5}D\\
&=\frac{12}{5}(L+A)=\frac{12}{5}OPT.
\end{align*}
The proof is completed.
\end{proof}

\noindent{\bf Proof of Theorem \ref{th7}}

Let $T_{1}$ be the tour computed in Step 2 of Algorithm \ref{alg4}. Then we have
$$c(T_{1})\leq\frac{3}{2}OPT+\frac{1}{2}L+2D.$$

\begin{proof} We first consider an optimal solution $OPT$. Orientate its edges arbitrarily. In $OPT$, let $u_{i}$ and $v_{i}$ be the first and last vertices of cluster $\overline{C_{i}}$. Without loss of generality, we assume that $a_{i}$ and $b_{i}$ are two vertices in $OPT$ and $u_{i}$, $a_{i}$, $b_{i}$, $v_{i}$ are in this order. Then for all $k$ clusters, by summing them up, we can get
$$L\geq\sum^{k}_{i=1}\Big{(}c(u_{i},a_{i})+c(a_{i},b_{i})+c(b_{i},v_{i})\Big{)}.$$

Since $A$ is the sum of the lengths of the remaining edges of $OPT$, i.e., $A$ denotes the edge links between clusters, there exists a rural postman tour with special edges $(a_{i},b_{i})$ of length at most
$$A+\sum^{k}\limits_{i=1}\Big{(}c(u_{i},a_{i})+c(a_{i},b_{i})+c(b_{i},v_{i})\Big{)}.$$

The approximation ratio of the algorithm for RPP is 1.5 from Remark \ref{re1} earlier. Then after applying the algorithm, the length of the tour is at most
$$1.5\bigg{(}A+\sum^{k}_{i=1}\Big{(}c(u_{i},a_{i})+c(a_{i},b_{i})+c(b_{i},v_{i})\Big{)}\bigg{)}.$$

Since $\sum^{k}\limits_{i=1}c(p_{i})\leq\frac{3}{2}L$, we replace the special edge $(a_{i},b_{i})$ by the path $p_{i}$ to obtain
\begin{align*}
~~~c(T_{1}) &\leq\frac{3}{2}\bigg{(}A+\sum^{k}_{i=1}\Big{(}c(u_{i},a_{i})+c(a_{i},b_{i})+c(b_{i},v_{i})\Big{)}\bigg{)}-\sum^{k}_{i=1}c(a_{i},b_{i})+\frac{3}{2}L\\
&\leq\frac{3}{2}OPT+\frac{1}{2}\sum^{k}_{i=1}\Big{(}c(u_{i},a_{i})+c(a_{i},b_{i})+c(b_{i},v_{i})\Big{)}+\sum^{k}_{i=1}\Big{(}c(u_{i},a_{i})+c(b_{i},v_{i})\Big{)}\\
&\leq\frac{3}{2}OPT+\frac{1}{2}\sum^{k}_{i=1}\Big{(}c(u_{i},a_{i})+c(a_{i},b_{i})+c(b_{i},v_{i})\Big{)}+2D\\
&\leq\frac{3}{2}OPT+\frac{1}{2}L+2D.
\end{align*}
\end{proof}

\noindent{\bf Proof of Theorem \ref{th8}}

Let $T_{2}$ be the tour computed in Step 3 of Algorithm \ref{alg4}. Then we have
\[
c(T_{2})\leq\frac{3}{2}OPT+3L-2D.
\]

\begin{proof} For $T_{2}$, we first find two vertices $s_{i}$ and $t_{i}$ that maximize $c(s_{i},t_{i})$ and this can be done in polynomial time. This process can be viewed as a Rural Postman Problem instance. We can find a solution for RPP which is at most $\frac{3}{2}OPT$. Then, we replace each special edge $(s_{i},t_{i})$ by a path connecting $s_{i}$ and $t_{i}$, that includes all required vertices in $\overline{C_{i}}$. The length of these paths is bounded as follows
\[
3\sum^{k}\limits_{i=1}MST(\overline{C_{i}})-D\leq 3L-D.
\]
 Hence the length of the tour is at most
 \[\frac{3}{2}OPT-D+(3L-D)=\frac{3}{2}OPT+3L-2D.
 \]
\end{proof}

\noindent{\bf Proof of Theorem \ref{th13}}

 Let $T$ be the tour output by Algorithm \ref{alg5}. Then
$$c(T)\leq\frac{9}{4}OPT.$$

\begin{proof}  The length of the paths computed in Step 1 is at most:
\begin{align*}
~~~c(P) &\leq \min\left\{3L-D,\frac{3}{2}L+\frac{1}{2}D\right\}\\
&\leq \frac{1}{2}(3L-D)+\frac{1}{2}\left(\frac{3}{2}L+\frac{1}{2}D\right)\\
&=\frac{9}{4}L-\frac{1}{4}D.
\end{align*}

Since $A+D=OPT$, there is a solution output by Algorithm Long-Arc 2 and Algorithm Short-Arc 2 for RPP of length at most $A+D$. The performance ratio is 1.5 from Remark~\ref{re1}. So the lengths of the two solutions we can find for RPP  are $\frac{3}{2}(A+D)$ and $3A+2D$, respectively. Therefore in Step 2, the solution returned by RPP algorithm is bounded as follows:
\begin{align*}
~~~c(S) &\leq \min\left\{\frac{3}{2}(A+D),3A+D\right\}\\
&\leq \frac{1}{2}\frac{3}{2}(A+D)+\frac{1}{2}(3A+D)\\
&=\frac{9}{4}A+\frac{5}{4}D.
\end{align*}

In Step 3 of Algorithm \ref{alg5}, the two solutions are merged by replacing edges of length $D$ in the solution of TGPP with that of RPP. We thus obtain an upper bound on the length of the solution $T$ by the above inequalities:
\begin{align*}
~~~c(T) &= c(P)-D+c(S)\\
&\leq \frac{9}{4}L-\frac{1}{4}D-D+\frac{9}{4}A+\frac{5}{4}D\\
&=\frac{9}{4}(L+A)=\frac{9}{4}OPT.
\end{align*}
\end{proof}
\end{document}